\newcommand{\fm}[1]{\scriptsize\mbox{\ensuremath{#1}}}
\newcommand{\ft}[1]{\scriptsize #1}
\newtheorem{theorem}{Theorem}
\newtheorem{definition}{Definition}
\newcommand{\ie}{{\it i.e.},\ }
\newcommand{\cf}{{\it cf.} }
\newcommand{\zz}{\mathbf{0}}
\newcommand{\bb}{\mathbf{b}}
\renewcommand{\gg}{\mathbf{g}}
\renewcommand{\SS}{\mathbf{S}}
\newcommand{\MM}{\mathbf{M}}
\newcommand{\GG}{\mathbf{G}}
\renewcommand{\AA}{\mathbf{A}}
\newcommand{\RR}{\ensuremath{\mathbb{R}}}
\newcommand{\EE}{\ensuremath{\mathbb{E}}}
\newcommand{\FF}{\ensuremath{\mathbb{F}}}
\newcommand{\Pe}{\ensuremath{P_e}}
\DeclareMathOperator{\diag}{diag}
\begin{document}

\title{Random Access with Physical-layer Network Coding}

\author{
\IEEEauthorblockN{
Jasper Goseling\IEEEauthorrefmark{1}\IEEEauthorrefmark{4},
Michael Gastpar\IEEEauthorrefmark{2}\IEEEauthorrefmark{3}\IEEEauthorrefmark{4} and
Jos H. Weber\IEEEauthorrefmark{4} 
}\\
\IEEEauthorblockA{
 \IEEEauthorrefmark{1}
 Stochastic Operations Research,\\
 University of Twente, The Netherlands\\
}
\IEEEauthorblockA{
 \IEEEauthorrefmark{2}
 Laboratory for Information in Networked Systems,\\
 Ecole Polytechnique F\'ed\'erale de Lausanne, Switzerland\\
}
\IEEEauthorblockA{
 \IEEEauthorrefmark{3}
Dept. of EECS, University of California, Berkeley\\
}
\IEEEauthorblockA{
 \IEEEauthorrefmark{4}
 Delft University of Technology, The Netherlands\\
 j.goseling@utwente.nl, michael.gastpar@epfl.ch, j.h.weber@tudelft.nl
}
}

\maketitle

%
%
%
\begin{abstract}
Leveraging recent progress in physical-layer network coding we propose a  
new approach to random access: When packets collide, it is possible to  
recover a linear combination of the packets at the receiver. Over many  
rounds of transmission, the receiver can thus obtain many linear  
combinations and eventually recover all original packets. This is by  
contrast to slotted ALOHA where packet collisions lead to complete  
erasures. The throughput of the proposed strategy is derived and shown  
to be significantly superior to the best known strategies, including  
multipacket reception.
\end{abstract}

%
%
%
\section{Introduction}

Consider a wireless network with multiple users transmitting to a single receiver using a random access mechanism. Users are in one of two states, active or inactive, and do not have knowledge of the states of other users. The receiver has complete knowledge of the states of all users. The network is operated in rounds.
In each round each user chooses his state at random independently of the state of other rounds and independent of the other users. The problem studied in the current paper is how to, given the constraints of the random access mechanism, organize the physical layer. This involves, for example, selecting a type of forward-error correcting code and the rate at which it is used. The \emph{contribution} of the current paper is a new approach to random access, based on \emph{physical-layer network coding}. 

%
%
\begin{figure}
\hfill
\subfloat[Slotted ALOHA\label{fig:introaloha}]{
\beginpgfgraphicnamed{pgftimea}
\begin{tikzpicture}[scale=0.8]

\newcommand{\drawslot}[3]{
\begin{scope}[xshift=#1,yshift=#2]
\draw[rounded corners=1pt,color=black,fill=black!10!white] (.1,.1) rectangle (1.9,1.3);
\foreach \x in {.2,.3,.4,...,1.8}
	{
	\draw[black!20!white] (\x,.1) -- (\x,1.3);
	}
\foreach \y in {.2,.3,.4,...,1.2}
  {
  \draw[black!20!white] (.1,\y) -- (1.9,\y);
  }
\draw[rounded corners=1pt,color=black] (.1,.1) rectangle (1.9,1.3);
\node[rectangle,inner sep=0] at (1,.7) {\fm{#3}};
\end{scope}
}

\draw[-latex] (0,-.1) -- (12.5,-.1);
\foreach \slot in {0,2,4,...,12}
	{
	\draw (\slot,-.2) -- (\slot,0);
	}

\node[anchor=east] at (-.1,0.7) {\ft{User $2$}};
\node[anchor=east] at (-.1,2.1) {\ft{User $1$}};
\node[anchor=east] at (-.1,-.9) {\ft{Receiver}};

\drawslot{2cm}{0cm}{C}
\drawslot{4cm}{0cm}{C}
\drawslot{10cm}{0cm}{D}

\drawslot{0cm}{1.4cm}{A}
\drawslot{2cm}{1.4cm}{B}
\drawslot{8cm}{1.4cm}{B}

\drawslot{4cm}{-1.6cm}{C}
\drawslot{10cm}{-1.6cm}{D}

\drawslot{0cm}{-1.6cm}{A}
\drawslot{8cm}{-1.6cm}{B}


\end{tikzpicture}
\endpgfgraphicnamed
}
\hfill{}

\hfill
\subfloat[Multipacket reception (MPR)\label{fig:intrompr}]{
\beginpgfgraphicnamed{pgftimeb}
\begin{tikzpicture}[scale=0.8]

\newcommand{\drawslot}[3]{
\begin{scope}[xshift=#1,yshift=#2]
\draw[rounded corners=1pt,color=black,fill=black!10!white] (.1,.1) rectangle (1.9,.6);
\foreach \x in {.2,.3,.4,...,1.8}
	{
	\draw[black!20!white] (\x,.1) -- (\x,.6);
	}
\foreach \y in {.2,.3,.4,.5}
  {
  \draw[black!20!white] (.1,\y) -- (1.9,\y);
  }
\draw[rounded corners=1pt,color=black] (.1,.1) rectangle (1.9,.6);
\node[rectangle,inner sep=0] at (1,.35) {\fm{#3}};
\end{scope}
}

\draw[-latex] (0,-.1) -- (12.5,-.1);
\foreach \slot in {0,2,4,...,12}
	{
	\draw (\slot,-.2) -- (\slot,0);
	}

\node[anchor=east] at (-.1,.35) {\ft{User $2$}};
\node[anchor=east] at (-.1,1) {\ft{User $1$}};
\node[anchor=east] at (-.1,-.55) {\ft{Receiver}};

\drawslot{2cm}{0cm}{C}
\drawslot{4cm}{0cm}{D}

\drawslot{0cm}{.7cm}{A}
\drawslot{2cm}{.7cm}{B}

\drawslot{2cm}{-1.6cm}{C}
\drawslot{4cm}{-.9cm}{D}
\drawslot{0cm}{-.9cm}{A}
\drawslot{2cm}{-.9cm}{B}


\end{tikzpicture}
\endpgfgraphicnamed
}
\hfill{}

\hfill
\subfloat[Physical-layer network coding (PLNC)\label{fig:introplnc}]{
\beginpgfgraphicnamed{pgftimec}
\begin{tikzpicture}[scale=.8]

\newcommand{\drawslot}[3]{
\begin{scope}[xshift=#1,yshift=#2]
\draw[rounded corners=1pt,color=black,fill=black!10!white] (.1,.1) rectangle (1.9,1.3);
\foreach \x in {.2,.3,.4,...,1.8}
	{
	\draw[black!20!white] (\x,.1) -- (\x,1.3);
	}
\foreach \y in {.2,.3,.4,...,1.2}
  {
  \draw[black!20!white] (.1,\y) -- (1.9,\y);
  }
\draw[rounded corners=1pt,color=black] (.1,.1) rectangle (1.9,1.3);
\node[rectangle,inner sep=0] at (1,.7) {\fm{#3}};
\end{scope}
}

\newcommand{\plus}{
\mathbin{\tikz[scale=.8]{\draw[black,ultra thin] (-2pt,0pt) -- (2pt,0pt); \draw[black,ultra thin] (0pt,-2pt) -- (0pt,2pt);}}
}

\draw[-latex] (0,-.1) -- (12.5,-.1);
\foreach \slot in {0,2,4,...,12}
	{
	\draw (\slot,-.2) -- (\slot,0);
	}

\node[anchor=east] at (-.1,.7) {\ft{User $2$}};
\node[anchor=east] at (-.1,2.1) {\ft{User $1$}};
\node[anchor=east] at (-.1,-.9) {\ft{Receiver}};

\drawslot{2cm}{0cm}{gC\!\plus\!hD}
\drawslot{4cm}{0cm}{kC\!\plus\!lD}

\drawslot{0cm}{1.4cm}{aA\!\plus\!bB}
\drawslot{2cm}{1.4cm}{cA\!\plus\!dB}
\drawslot{8cm}{1.4cm}{eA\!\plus\!fB}

\drawslot{4cm}{-1.6cm}{kC\!\plus\!lD}

\drawslot{0cm}{-1.6cm}{aA\!\plus\!bB}
\drawslot{2cm}{-1.6cm}{}
\node at (3,-.67) {\fm{cA\!\plus\!dB}};
\node at (3,-1.13) {\fm{\plus gC\!\plus\!hD}};
\drawslot{8cm}{-1.6cm}{eA\!\plus\!fB}


\end{tikzpicture}
\endpgfgraphicnamed
}
\hfill{}
\caption{Illustration of various approaches to random access. The height of the packets reflects the amount of information contained in the packet.}
\end{figure}

%
%

The most thoroughly studied approach to random access is slotted ALOHA~\cite{abramson1970aloha}, \cf~\cite{bertsekasgallager}. In ALOHA, if more than one user is active, a packet collision occurs and the receiver does not obtain any information about the transmitted packets. This is illustrated in~Figure~\ref{fig:introaloha}, in which different rounds are represented along the horizontal axis. Packets transmitted by the users are depicted above the axis, the packets below the axis represent the information obtained by the receiver. In order to compare various strategies, we let the height of a packet reflect the amount of information contained in the packet, \ie it reflects the rate of the underlying forward-error correcting code.


It is well known, see for instance~\cite{elgamalkim}, that if the communication rates of the users are chosen carefully the receiver can decode all, or a subset of, the packets of the active users. This is often referred to as \emph{multipacket reception} and its use for random access was considered in~\cite{ghez88mpr}. It is illustrated in Figure~\ref{fig:intrompr}. As reflected in the figure, multipacket reception requires the rate to be adjusted. More recent results on multipacket reception for random access are given in~\cite{minero2012random, medard2004capacity, minero2009mpr}). One of the aspects studied in~\cite{minero2012random} is the tradeoff between the rate of the code and the maximum number of packets that can be decoded.

The \emph{strategy proposed in the current paper} is based on another way of dealing with collisions. Instead of trying to decode any of the packets transmitted by the users, the receiver attempts to decode a linear combination of these packets. This is known as \emph{reliable physical-layer network coding}~\cite{nazer2011procieee}. In the proposed strategy, the users transmit in each time slot a packet that is itself a linear combination of the messages intended for the receiver. After obtaining a sufficient number of linear combinations the receiver can retrieve the original messages. The physical-layer network coding strategy is illustrated in Figure~\ref{fig:introplnc}. For the example in the figure, the information obtained by the receiver can be represented as
\begin{equation} \label{eq:intromatrix}
\begin{bmatrix}
a & b & 0 & 0 \\
c & d & g & h \\
0 & 0 & k & l \\
e & f & 0 & 0 
\end{bmatrix}
\begin{bmatrix}
 A \\ B \\ C \\ D
\end{bmatrix}
=
\begin{bmatrix}
b_1 \\ b_2 \\ b_3 \\ b_4
\end{bmatrix},
\end{equation}
where $A,\cdots,D$ denote the messages and $b_1,\dots,b_4$ the information obtained by the receiver at the channel output. It will be shown that our strategy significantly outperforms existing strategies.

The concept of physical-layer network coding is studied in, for instance~\cite{nazer07comp,narayanan2007joint,popovski2007physical,rankov2006achievable,zhang06hot,katti07analog}. See~\cite{nazer2011procieee} for a survey of recent results. The canonical example demonstrating the use of decoding linear combinations of packets is that of two-way relaying. By applying network coding~\cite{yeung1999distributed,wu05ciss}, the relay retransmits a linear combination of packets. As a consequence, the relay might as well obtain this linear combination directly. In~\cite{nazer07comp} and~\cite{narayanan2007joint} the aim is to obtain these linear combinations reliably. In contrast, in~\cite{popovski2007physical,rankov2006achievable,zhang06hot,katti07analog} the relay is satisfied with a noisy version of these linear combinations, which are then retransmitted. Our strategy is based on \emph{reliable} physical layer network coding.


Slotted ALOHA has received a lot of attention in the literature, \cf~\cite{bertsekasgallager} and references therein. Many aspects are well understood and improvements on slotted ALOHA  have been suggested along a multitude of directions. There is, for example, a good understanding of throughput, stability, delay and the impact of selfish users~\cite{mackenzie2003stability}. Improvements on ALOHA that have been suggested include the use of collision resolution protocols and to introduce cooperation between users~\cite{karamchandani2011cooperation}. In the current paper we restrict our attention to studying the throughput. Also, we will compare the proposed strategy with other strategies that do not employ feedback from the receiver and do not allow for cooperation.

Our model and our definition of throughput, to be defined precisely in Section~\ref{sec:model}, are rooted in information theory. This has the advantage that it allows to not only analyze the performance of the proposed strategy, but to also compare it with an upper bound on the throughput that must hold for all strategies. The model at hand is that of a  time-varying multiple access channel with partial state information at the transmitters and full state information at the decoder~\cite{elgamalkim}. While there exist some capacity characterizations for this class of models~\cite{das2002capacities}, no methods are known to evaluate these characterizations and obtain numerical values. The additional structure and properties of the random access model allow us to obtain upper bounds on the throughput that can be numerically evaluated. Other examples exists of special cases in which more explicit capacity characterizations are known, see for instance~\cite{cemal2005multiple}.



The outline of the remainder of this paper is as follows. In Section~\ref{sec:model} we define the model. Reliable physical-layer network coding is briefly introduced in Section~\ref{sec:compfw}. The main contributions of the current paper, a description of our approach and an analysis of the resulting throughput are given in Section~\ref{sec:strategy}. Section~\ref{sec:comparison} provides an analysis of the throughput of some other approaches as well as an upper bound on the achievable throughput. A numerical evaluation of results is given in~\ref{sec:results}. Finally, in Section~\ref{sec:discussion} we conclude with a discussion of the results that are presented in the current paper and an outlook on future work.

%
%
%
\section{Reliable Physical-layer Network Coding} \label{sec:compfw}

One key ingredient of the strategy proposed in this paper is the technique of so-called ``computation codes''
and the recently introduced cooperative communication technique entitled ``compute-and-forward''.
We provide an informal overview of this technique here, but refer the reader to~\cite{nazer07comp,nazer11compforw,nazer2011procieee}
for technical details. A related approach was taken for the special case of the two-way relay channel
in \cite{narayanan2007joint}.

To set the stage for the technique, we consider two transmitters, each having an independent message.
Moreover, we think of the two messages as being represented as strings over an appropriately chosen
finite field ${\mathbb F}_q.$
That is, we denote the message of transmitter 1 as
\begin{align}
  M_1 & = ( M_1(1), M_1(2), \ldots, M_1(L_c) ),
\end{align}
and the message of transmitter 2 as
\begin{align}
  M_2 & = ( M_2(1), M_2(2), \ldots, M_2(L_c) ),
\end{align}
where $M_i(j)$ is an element of ${\mathbb F}_q.$
Thus, at each transmitter, there are $q^L$ different possible messages.
Each transmitter can encode its message into a string of $B$ real numbers
satisfying an average power constraint $P.$
In line with standard information-theoretic terminology, we define the {\em rate} of the resulting
two codes, which is the same for both transmitters, by
\begin{equation}
 R =  R_1 = R_2  =  \frac{L_c \log_2 q}{ B}  \mbox{ bits per channel use.}  \label{Eqn-compfw-rate}
\end{equation}

The real-valued strings of length $B,$ which we denote simply as $X_1$ and $X_2,$ are then transmitted
element-wise across the standard AWGN multiple-access channel, whose channel output is given by
\begin{equation}
  Y  =  X_1 + X_2 + Z,
\end{equation}
where $Z$ is additive white Gaussian noise with unit variance.

The decoder, upon observing the real-valued string $Y$ of length $B,$ is asked to provide
an estimate sequence $(\hat{M}(1), \hat{M}(2), \ldots, \hat{M}(L_c))$
in such a way as to minimize the probability of the event
\begin{equation}
  (\hat{M}(1), \hat{M}(2), \ldots, \hat{M}(L_c)) 
  \not=   ( M_1(1) + M_2(1), \ldots, M_1(L_c) + M_2(L_c) ).
\end{equation}
In this sense, the receiver recovers a function (namely, the sum) of the original messages, which is why this approach
is referred to as computation coding.
We refer to the rate $R_1 = R_2$ as the {\em computation rate.}
We then ask the standard information-theoretic key question:
How large one can make the computation rate such that 
the probability of the above event can be made arbitrarily small as $B$ increases?

The best currently known scheme~\cite{nazer07comp,nazer11compforw},
and the one we will employ in the present paper,
involves using one and the same code at both encoders, namely,
\begin{eqnarray}
  X_1 = F(M_1) & \mbox{and} &  X_2 = F(M_2)  , \label{Eq-compcodeH}
\end{eqnarray}
and we refer to $F(\cdot)$ as the {\em computation code.}
It can be shown that there exists such a computation code
as long as the computation rate satisfies $R \le \frac{1}{2} \log_2 ( \frac{1}{2} + P).$
Interestingly, this result cannot be established using the standard information-theoretic random coding
arguments. Rather, the proof employs random {\em lattice} code constructions.
In straightforward extension of the above, we can look at $K$ transmitters and ask the receiver
to recover the sum of all $K$ message strings.
Then, we have the following statement:
\begin{theorem}[\cite{nazer11compforw}, Thm. 2]\label{th:compfw}
For the standard AWGN multiple-access channel, the following computation rate is achievable:
\begin{eqnarray}
  R & = & \frac{1}{2} \log_2 \left( \frac{1}{K} + P \right).
\end{eqnarray}
\end{theorem}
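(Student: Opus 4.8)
The plan is to establish achievability through \emph{nested lattice codes} rather than the i.i.d.\ random codes of the standard multiple-access argument, since (as noted in the excerpt) linearity of the decoded function is exactly what random Gaussian codebooks fail to provide. First I would construct a pair of lattices $\Lambda \subseteq \Lambda_f \subset \RR^B$ in which the coarse lattice $\Lambda$ is good for quantization (its second moment set to $P$, normalized second moment tending to $\tfrac{1}{2\pi e}$) and the fine lattice $\Lambda_f$ is good for channel coding in the Poltyrev sense. Both are built from one and the same linear code over $\FF_q$ via Construction A, so that the set of coset representatives $\Lambda_f \cap \mathcal{V}(\Lambda)$, where $\mathcal{V}(\Lambda)$ is the fundamental Voronoi region, inherits an $\FF_q$-linear labeling. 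This labeling is the device that will make a \emph{sum of lattice points} correspond to the finite-field sum $\sum_{i=1}^K M_i$. The existence of such nested pairs with the required goodness at any prescribed nesting ratio is the Erez--Zamir machinery I would cite rather than reprove.

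Next I would specify the common encoder. Each user $i$ maps $M_i$ to its coset representative $v_i \in \Lambda_f \cap \mathcal{V}(\Lambda)$, adds a dither $U_i$ uniform on $\mathcal{V}(\Lambda)$ and known to the receiver, and transmits $X_i = (v_i - U_i) \bmod \Lambda$. By the Crypto Lemma, $X_i$ is uniform on $\mathcal{V}(\Lambda)$ and statistically independent of $v_i$, so its per-symbol power equals the second moment $P$ of $\Lambda$ and the power constraint is met; crucially, all $K$ users employ the identical code $F(\cdot)$, as the theorem demands.

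At the decoder I would scale $Y$ by a factor $\alpha$, add back the dithers, and reduce modulo $\Lambda$. Using $X_i + U_i \equiv v_i \pmod{\Lambda}$, a short computation yields
\begin{equation}
  T \;=\; \Bigl(\alpha Y + \textstyle\sum_{i} U_i\Bigr) \bmod \Lambda \;=\; \Bigl(\textstyle\sum_{i} v_i + N_{\mathrm{eff}}\Bigr) \bmod \Lambda,
\end{equation}
with effective noise $N_{\mathrm{eff}} = (\alpha-1)\sum_i X_i + \alpha Z$. Since the $X_i$ are independent of power $P$ and $Z$ has unit variance, the per-dimension variance of $N_{\mathrm{eff}}$ is $(1-\alpha)^2 KP + \alpha^2$; minimizing over $\alpha$ gives the MMSE choice $\alpha^\star = KP/(1+KP)$ and residual variance $N^\star = KP/(1+KP)$. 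Quantizing $T$ to $\Lambda_f$ recovers $\bigl(\sum_i v_i\bigr)\bmod \Lambda$, and by $\FF_q$-linearity of the labeling this is precisely $\sum_i M_i$. Reliable decoding over the resulting modulo-$\Lambda$ additive-noise channel is possible at every rate below $\tfrac12\log_2(P/N^\star) = \tfrac12\log_2\!\bigl(\tfrac{1}{K}+P\bigr)$, which is the claimed computation rate.

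The main obstacle is the error analysis, because $N_{\mathrm{eff}}$ is \emph{not} Gaussian: it carries $K$ uniform-over-Voronoi components, and the channel seen by $\Lambda_f$ is a modulo-lattice channel rather than a plain AWGN one. The step I expect to be hardest is showing that this channel nonetheless supports the rate $\tfrac12\log_2(P/N^\star)$. I would handle it by invoking the quantization-goodness of $\Lambda$ (a smoothing argument showing that the dithered signal contributions, after reduction modulo $\Lambda$, are close in the relevant sense to Gaussian of the matched variance) together with the Poltyrev-goodness of $\Lambda_f$ (whose error exponent drives the decoding-error probability to zero as $B \to \infty$ whenever the volume-to-noise ratio exceeds the threshold implied by $N^\star$). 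Reconciling the non-Gaussian effective noise with the Gaussian-calibrated Poltyrev bound is the crux of the proof.
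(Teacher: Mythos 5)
Your proposal is correct and takes essentially the same route the paper relies on: the paper does not prove this theorem itself but imports it from the compute-and-forward paper of Nazer and Gastpar, noting only that the proof ``employs random lattice code constructions,'' which is precisely the nested-lattice argument you reconstruct (Construction A labeling, dithering with the Crypto Lemma, MMSE scaling with $\alpha^\star = KP/(1+KP)$ and residual variance $N^\star = KP/(1+KP)$, yielding $\tfrac12\log_2\bigl(\tfrac{1}{K}+P\bigr)$). Your identification of the non-Gaussian effective noise versus the Poltyrev-calibrated error analysis as the crux also matches where the technical weight falls in the cited proof.
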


%
%
%
\section{Model and Problem Formulation} \label{sec:model}
\begin{figure}
\hfill
\beginpgfgraphicnamed{pgfmodel}
\begin{tikzpicture}[scale=.5,yscale=-1]
\tikzstyle{boxa}=[draw, rounded corners,minimum width=12mm,minimum height=8mm];
\tikzstyle{boxb}=[draw, circle];
\tikzstyle{->}=[-latex];
\node[boxa,anchor=east] (enca) at (-4,-2) {\ft{Enc $1$}}; 
\node[boxa,anchor=east] (encb) at (-4,2) {\ft{Enc $2$}}; 
\node[boxb] (statea) at (-1,-2) {\fm{\times}};
\node[boxb] (stateb) at (-1,2) {\fm{\times}};
\node[boxb] (noise) at (2,0) {\fm{+}};
\node[boxa,anchor=west] (dec) at (5,0) {\ft{Dec}};

\draw[->] ($ (enca.west) - (1cm,0) $) -- (enca);
\draw[->] ($ (enca.north) - (0,1cm) $) -- node[at start,above] {\fm{S_1}} (enca);
\draw[->] (enca) -- node[above,near start] {\fm{X_1}} (statea);
\draw[->] ($ (statea.north) - (0,1cm) $) -- node[at start,above] {\fm{S_1\in\{0,1\}}} (statea);
\draw[->] (statea) to (noise);

\draw[->] ($ (encb.west) - (1cm,0) $) -- (encb);
\draw[->] ($ (encb.south) + (0,1cm) $) -- node[at start,below] {\fm{S_2}} (encb);
\draw[->] (encb) to node[above,near start] {\fm{X_2}} (stateb);
\draw[->] ($ (stateb.south) + (0,1cm) $) -- node[at start,below] {\fm{S_2\in\{0,1\}}} (stateb);

\draw[->] (stateb) to (noise);
\draw[->] ($ (noise.north) - (0,1cm) $) -- node[at start,above] {\fm{Z}} (noise); 
\draw[->] (noise) to node[above,near end] {\fm{Y}} (dec);
\draw[->] ($ (dec.north) - (0,1cm) $) -- node[at start,above] {\fm{S_1, S_2}} (dec); 
\draw[->] (dec) -- ($ (dec.east) + (1cm,0) $);
\end{tikzpicture}
\endpgfgraphicnamed
\hfill{}
\caption{Model}
\label{fig:model}
\end{figure}

We consider a system with $K$ users. Time is slotted. The system is operated in blocks of $B$ time slots. The length of a block, $B$, is a design parameter. Let $s(t)=1+(t-1)\bmod B$ and $n(t)=\lceil t/B \rceil$, \ie time slot $t$ is the $s(t)$-th time slot in block $n(t)$. Also, let $t(n,s)=s+(n-1)B$, \ie the $s$-th time slot of block $n$ is $t(n,s)$.

The random access feature of the model is captured by state variables $S_i$ which can be zero or one, depending on whether a user is active ($1$) or inactive ($0$). Let $S_i(n)$ denote the state of user $i$ in block $n$.  The state of a user is independent and identically distributed over all blocks and independent of the state of other users. Users are active with probability $a$, \ie $\Pr(S_i(n)=1)=a$ for all $i=1,\dots,K$.


Let $X_i[t]\in\RR$ and $Y[t]\in\RR$ denote the signal transmitted by user $i$, respectively the signal obtained by the receiver, in time slot $t$. We consider an AWGN channel without fading, \ie
\begin{equation} \label{eq:channel}
 Y[t] = \sum_{i=1}^K S_i[t] X_i[t] + Z[t],
\end{equation}
where $S_i[t]\triangleq S_i(n(t))$ and $\{Z[t]\}$ is white Gaussian noise with unit variance.

We consider transmission over $N$ blocks of length $B,$
and will denote the real-valued sequences of length $NB$ transmitted by
the $K$ transmitters and the real-valued sequence of length $NB$ received
at the destination by
\begin{align}
X_i &= \left( X_i[1], X_i[2], \cdots, X_i[NB]\right), \\
Y &= \left( Y[1], Y[2], \cdots, Y[NB]\right),
\end{align}
respectively.

In the description of our proposed strategy, we will also find it convenient to
index individual elements in each block. To this end, 
using the shorthand
$X_i(n,s)=X_i[t(n,s)]$, $Y(n,s)=Y[t(n,s)]$ and $Z(n,s)=Z[t(n,s)],$
we introduce the notation
\begin{align}
 X_i(n) &= \left( X_i(n,1), X_i(n,2), \cdots, X_i(n,B)\right), \\
Y(n) &= \left( Y(n,1), Y(n,2), \cdots, Y(n,B)\right).
\end{align}
The channel model, \eqref{eq:channel}, can alternatively be written as
\begin{equation}
 Y(n,s) = \sum_{i=1}^K S_i(n) X_i(n,s) + Z(n,s).
\end{equation}
The model is illustrated for two users, \ie for $K=2$, in Figure~\ref{fig:model}.

\begin{definition}[Strategy]
A strategy defines encoders $E_i$, $i=1,\dots,K$, that map the user message $M_i$ and channel states $S_i$ to a signal $X_i$, \ie 
\begin{equation}
 E_i:  \left\{1,\dots,2^{NBR_i}\right\} \times\{0,1\}^{N}\to\RR^{NB},
\end{equation}
where we require these mappings to satisfy the following average power constraint:
\begin{equation}
 \frac{1}{NB}\sum_{t=1}^{NB} {\mathbb E} \left[X_i^2[t] \right] \leq P_i,
\end{equation}
for all $i=1,\dots,K$,
where the expectation is over all messages.
We denote the encoder mapping by $X_i=E_i(M_i,S_i)$.
Finally, a strategy also defines a decoder $D$ that uses knowledge of user states to map the received signal to an estimate of the user messages, \ie
\begin{equation}
  D: \RR^{NB}\times\{0,1\}^{KN}  \to \left\{1,\dots,2^{NBR_1}\right\}  \times \cdots \times \left\{1,\dots,2^{NBR_K}\right\},
\end{equation}
We denote the decoder mapping by
$\hat M \triangleq (\hat M_1,\dots,\hat M_K) = D(Y,S_1,\dots,S_K)$.
\end{definition}

For a given strategy, we define the resulting average error probability by
\begin{equation}
 \Pe =  \Pr\left[ (\hat M_1,\dots,\hat M_K)   \neq ( M_1,\dots, M_K)  \right].
\end{equation}

In the present paper, we restrict attention to the {\em symmetric } scenario where
all rates and all power constraints are equal, i.e.,
\begin{eqnarray}
R = R_1 = \ldots = R_K & \mbox{ and } & P = P_1 = \ldots = P_K, \nonumber
\end{eqnarray}
and we will refer to the {\em throughput} $T$ of a strategy simply as the sum of all the user rates, 
namely,
\begin{equation} \label{eq:T}
T=KR.
\end{equation}

The goal of the present paper is to characterize {\em achievable throughput,} which we define in
a standard information-theoretic manner:
\begin{definition}[Achievable throughput]
Throughput $T$ is achievable if there exists for every $\epsilon_1>0$ and $\epsilon_2>0$ a strategy with $R=T/K-\epsilon_1$ for which $\Pe\leq\epsilon_2$.
\end{definition}

{\em Remark.}
A few remarks about our modeling assumptions are in order:
\begin{enumerate}
\item Users are assumed to know $K$, the total number of users in the system.
\item There is block synchronization between users.
\item There is no feedback from the receiver to the users.
\item The long-term average power constraint allows to perform power control: Transmit at power $a^{-1}P$ in a block in which a user is active and with zero power otherwise, leading to long-term average power $P$.
\end{enumerate}
All of these could be relaxed, and we will discuss some of these relaxations in the discussion at the end.


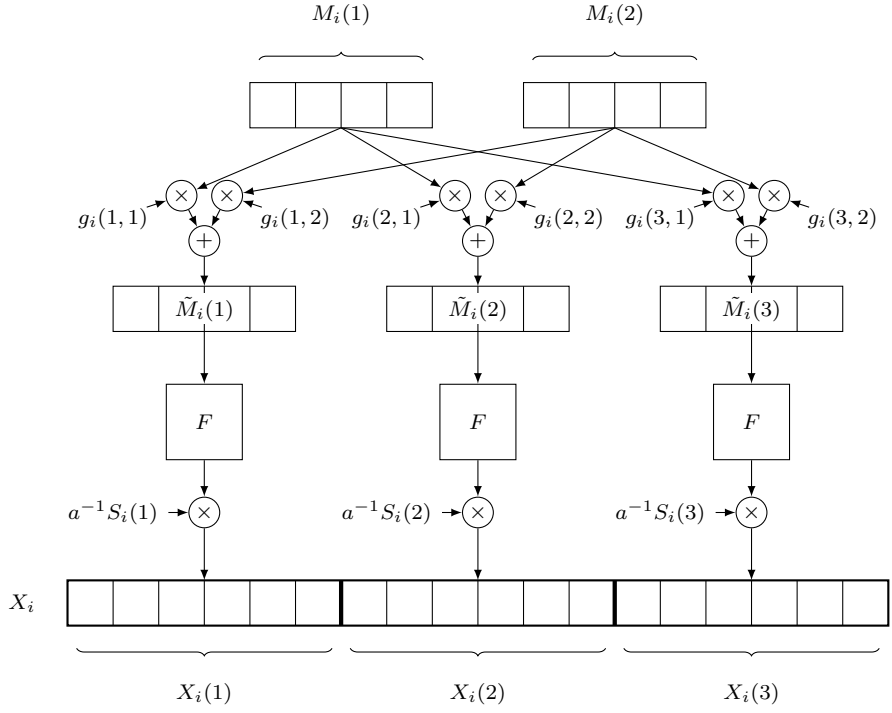
\begin{figure*}
\centering
\begin{tikzpicture}[scale=.6]
\tikzstyle{H}=[draw, rectangle,minimum width=10mm,minimum height=10mm];
\tikzstyle{boxb}=[draw, circle,inner sep=1pt];
\tikzstyle{->}=[-latex];


\draw[decoration={brace,amplitude=1mm},decorate] (-4.8,6) -- (-1.2,6);
\draw[decoration={brace,amplitude=1mm},decorate] (1.2,6) -- (4.8,6);

\node at (-3,7) {\fm{M_i(1)}};
\node at (3,7) {\fm{M_i(2)}};

\begin{scope}[xshift=-5cm,yshift=5cm]
\draw (0,-.5) rectangle (4,.5);
\foreach \x in {1, 2, 3}
 {
    \draw (\x,-.5) -- (\x,.5);
}
\end{scope}

\begin{scope}[xshift=1cm,yshift=5cm]
\draw (0,-.5) rectangle (4,.5);
\foreach \x in {1, 2, 3}
 {
    \draw (\x,-.5) -- (\x,.5);
}
\end{scope}

\node[boxb]  (c11) at (-6.5,3) {\fm{\times}};
\node[boxb]  (c21) at (-.5,3) {\fm{\times}};
\node[boxb]  (c31) at (5.5,3) {\fm{\times}};
\node[boxb]  (c12) at (-5.5,3) {\fm{\times}};
\node[boxb]  (c22) at (.5,3) {\fm{\times}};
\node[boxb]  (c32) at (6.5,3) {\fm{\times}};

\node[inner sep=0] (g11) at (-8,2.5) {\fm{g_i(1,1)}};
\node[inner sep=0] (g12) at (-4,2.5) {\fm{g_i(1,2)}};
\node[inner sep=0] (g21) at (-2,2.5) {\fm{g_i(2,1)}};
\node[inner sep=0] (g22) at (2,2.5) {\fm{g_i(2,2)}};
\node[inner sep=0] (g31) at (4,2.5) {\fm{g_i(3,1)}};
\node[inner sep=0] (g32) at (8,2.5) {\fm{g_i(3,2)}};

\draw[->] (g11) -- (c11);
\draw[->] (g12) -- (c12);
\draw[->] (g21) -- (c21);
\draw[->] (g22) -- (c22);
\draw[->] (g31) -- (c31);
\draw[->] (g32) -- (c32);

\draw[->] (-3,4.5) -- (c11);
\draw[->] (3,4.5) -- (c12);
\draw[->] (-3,4.5) -- (c21);
\draw[->] (3,4.5) -- (c22);
\draw[->] (-3,4.5) -- (c31);
\draw[->] (3,4.5) -- (c32);

\node[boxb]  (pl11) at (-6,2) {\fm{+}};
\node[boxb]  (pl21) at (0,2) {\fm{+}};
\node[boxb]  (pl31) at (6,2) {\fm{+}};

\draw[->] (c11) -- (pl11);
\draw[->] (c12) -- (pl11);
\draw[->] (c21) -- (pl21);
\draw[->] (c22) -- (pl21);
\draw[->] (c31) -- (pl31);
\draw[->] (c32) -- (pl31);

\draw[->] (pl11) -- (-6,1);
\draw[->] (pl21) -- (0,1);
\draw[->] (pl31) -- (6,1);

\begin{scope}[xshift=-8cm,yshift=.5cm]
\draw (0,-.5) rectangle (4,.5);
\foreach \x in {1, 2, 3}
 {
    \draw (\x,-.5) -- (\x,.5);
}
\node[fill=white,inner sep=1pt] at (2,0) {\fm{\tilde M_i(1)}};
\end{scope}

\begin{scope}[xshift=-2cm,yshift=.5cm]
\draw (0,-.5) rectangle (4,.5);
\foreach \x in {1, 2, 3}
 {
    \draw (\x,-.5) -- (\x,.5);
}
\node[fill=white,inner sep=1pt] at (2,0) {\fm{\tilde M_i(2)}};
\end{scope}

\begin{scope}[xshift=4cm,yshift=.5cm]
\draw (0,-.5) rectangle (4,.5);
\foreach \x in {1, 2, 3}
 {
    \draw (\x,-.5) -- (\x,.5);
}
\node[fill=white,inner sep=1pt] at (2,0) {\fm{\tilde M_i(3)}};
\end{scope}

\node[H] (h1) at (-6,-2) {\fm{F}};
\node[H] (h2) at (0,-2) {\fm{F}};
\node[H] (h3) at (6,-2) {\fm{F}};

\draw[->] (-6,0) to (h1);
\draw[->] (0,0) to (h2);
\draw[->] (6,0) to (h3);

\node[boxb] (m11) at (-6,-4) {\fm{\times}};
\node[boxb] (m21) at (0,-4) {\fm{\times}};
\node[boxb] (m31) at (6,-4) {\fm{\times}};

\draw[->] (h1) to (m11);
\draw[->] (h2) to (m21);
\draw[->] (h3) to (m31);

\draw[->] (m11) to (-6,-5.5);
\draw[->] (m21) to (0,-5.5);
\draw[->] (m31) to (6,-5.5);

\node (s1) at (-8,-4) {\fm{a^{-1}S_i(1)}};
\node (s2) at (-2,-4) {\fm{a^{-1}S_i(2)}};
\node (s3) at (4,-4) {\fm{a^{-1}S_i(3)}};

\draw[->] (s1) -- (m11); 
\draw[->] (s2) -- (m21); 
\draw[->] (s3) -- (m31);

\begin{scope}[xshift=-9cm,yshift=-6cm]
\node at (-1,0) {\fm{X_i}};
\draw[thick] (0,-.5) rectangle (18,.5);
\foreach \x in {1, 2, ..., 17}
 {
    \draw (\x,-.5) -- (\x,.5);
}
\draw[ultra thick] (6,-.5) -- (6,.5);
\draw[ultra thick] (12,-.5) -- (12,.5);
\end{scope}

\draw[decoration={brace,amplitude=1mm},decorate] (-3.2,-7) -- (-8.8,-7);
\draw[decoration={brace,amplitude=1mm},decorate] (2.8,-7) -- (-2.8,-7);
\draw[decoration={brace,amplitude=1mm},decorate] (8.8,-7) -- (3.2,-7);

\node at (-6,-8) {\fm{X_i(1)}};
\node at (0,-8) {\fm{X_i(2)}};
\node at (6,-8) {\fm{X_i(3)}};

\end{tikzpicture}
\caption{The encoder for user $i$.\label{fig:support}}
\end{figure*}

%
%
%
\section{Proposed Strategy} \label{sec:strategy}

The strategy presented in this section forms the main contribution of the current paper.


\subsection{Message structure}
Remember from Section~\ref{sec:model} that user $i$ has a message $M_i$ to transmit
\begin{equation}
 M_i\in\left\{1,\dots,2^{NBR}\right\}.
\end{equation}
The first step of the proposed strategy consists is expressing the message
$M_i$ as a string of $NBR/\log_2q$ symbols from $\FF_q,$
where $q$ will be suitably chosen.
These symbols are grouped in $L_b$ message substrings $M_i(\ell),$ each of length $L_c,$ such that we can express
\begin{equation}
 M_i = \left( M_i(1), \dots M_i(L_b)\right),
\end{equation}
where $M_i(\ell)\in\FF_q^{L_c},$ with
\begin{equation}
L_c = \frac{NBR}{L_b\log_2 q}.  \label{Eq-strategy-rate}
\end{equation}
The values of $L_b$ and $L_c$ need to be chosen carefully; an analysis is provided in subsection~\ref{ssec:peva}.

\subsection{Encoder}
The encoder $E_i$ at user $i$ consists of: 
\begin{itemize}
\item Matrix $\GG_i=[g_i(n,\ell)]$ of size $N\times L_b$ with elements from $\FF_q$,
\item Computation code $F:\FF_q^{L_c}\to\RR^B$, \ie a code of blocklength $B$ that takes $L_c$ message symbols. 
\end{itemize}

Encoder $i$ constructs the signal $X_i$, by performing the following steps for each block $n=1,\dots,N$:
\begin{enumerate}
%
%
\item The encoder first computes new equivalent message substrings $\tilde{M}_i(n)$ by mixing the original message substrings $M_i(\ell),$
as follows:
\begin{equation}
 \tilde{M}_i(n) = \sum_{\ell=1}^{L_b} g_i(n,\ell) M_i(\ell),
\end{equation}
where all operations on $M_i(\ell)$ are componentwise.
\item Use $\tilde{M}_i(n)$ as the input of computation code $F$ (as in Eqn.~\eqref{Eq-compcodeH}) and take the user state into consideration, \ie 
\begin{equation}
 X_i(n) = a^{-1}S_i(n)F(\tilde{M}_i(n)).
\end{equation}
\end{enumerate}

The encoder strategy is illustrated in Figure~\ref{fig:support}.

\subsection{Decoder}
The receiver decodes as follows:
\begin{enumerate}
\item In block $n$ the receiver observes the signal
\begin{equation}
 Y(n) = a^{-1}\sum_{\substack{i\in \{1,\dots,K\}:\\ \ S_i(n)=1}} F(\tilde{M}_i(n)) + Z(n).
\end{equation}
\item Assuming that $L_c$ is chosen properly such that the computation rate is achievable,
the computation code thus enables the decoder to recover
\begin{align} \label{eq:b}
 b(n) &= \sum_{\substack{i\in \{1,\dots,K\}:\\ \ S_i(n)=1}} \tilde{M}_i(n) \notag \\
&= \sum_{\substack{i\in \{1,\dots,K\}:\\ \ S_i(n)=1}}\ \sum_{\ell=1}^{L_b} g_i(n,\ell) M_i(\ell), 
\end{align}
for all $n=1,\dots,N$.
\item It remains to retrieve the messages by solving the system of linear equations given by~\eqref{eq:b}. It is shown in the next subsection that the right choice of $L_b$ results in a full rank system of equations. 
\end{enumerate}

\subsection{Performance analysis} \label{ssec:peva}

First of all, we consider the use of the computation code, which in our strategy is done separately in each block
of $B$ channel uses, and involves strings of $L_c$ symbols from $\FF_q.$
By combining Theorem~\ref{th:compfw} with Eqn. \eqref{Eqn-compfw-rate},
we know that if we choose
\begin{align}
  \frac{L_c \log_2 q }{B} &= \frac{1}{2}\log_2\left(\frac{1}{K}+a^{-1}P\right), \label{eq:Lc}
\end{align}
we can make the error probability arbitrarily small if $B$ is chosen sufficiently large.

Moreover, we have to choose $L_b$ such that the system of linear equations (over $\FF_q$) given by~\eqref{eq:b}
is full rank. We demonstrate that by choosing
\begin{align} \label{eq:Lb}
  \frac{L_b}{N} &= \frac{1-(1-a)^K}{K},
\end{align}
we can make this probability arbitrarily close to one if $N$ is chosen sufficiently large.

By~\eqref{eq:T} and~\eqref{Eq-strategy-rate}, we have that
\begin{eqnarray}
  T  =  K R & = &  \frac{K L_b L_c \log_2 q}{NB}.
\end{eqnarray}
The above choices of $L_b$ and $L_c$ thus lead to the following theorem, which forms the main contribution of the current paper.
\begin{theorem} \label{th:main}
Throughput
\begin{equation}
 T = \left(1- \left(1-a\right)^K\right)\frac{1}{2}\log\left[\frac{1}{K}+a^{-1}P\right]
\end{equation}
is achievable using a physical-layer network coding strategy.
\end{theorem}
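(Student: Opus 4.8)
The plan is to decompose the overall error probability into two contributions---failure of the per-block computation code and rank deficiency of the linear system~\eqref{eq:b}---and to show that each can be driven to zero, the first by letting the block length $B$ grow and the second by letting the number of blocks $N$ grow. The identity $T = KR = \frac{K L_b L_c \log_2 q}{NB} = K\cdot\frac{L_b}{N}\cdot\frac{L_c\log_2 q}{B}$ then immediately yields the claimed throughput once the two factors are set to $\frac{L_b}{N}=\frac{1-(1-a)^K}{K}$ and $\frac{L_c\log_2 q}{B}=\frac12\log_2(\frac1K+a^{-1}P)$, the $\epsilon_1$ slack in the definition of achievability absorbing the small back-offs introduced below.

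First I would dispose of the computation code. With power control the active users each transmit at power $a^{-1}P$, and in any block at most $K$ users are active. Since $\frac12\log_2(\frac1j+a^{-1}P)$ is decreasing in the number $j$ of active users, the worst case is $j=K$, so choosing $L_c$ according to~\eqref{eq:Lc} keeps the code rate below the computation rate guaranteed by Theorem~\ref{th:compfw} for every realized active set. Hence each of the $N$ blocks decodes its sum $b(n)$ with probability at least $1-\epsilon(B)$, and a union bound gives total computation-code error at most $N\epsilon(B)$, which can be made negligible by choosing $B=B(N)$ large enough after $N$ is fixed.

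The crux is the rank analysis. I would choose the entries $g_i(n,\ell)$ independently and uniformly from $\FF_q$ and show that the $N\times KL_b$ coefficient matrix $\AA$ with entries $S_i(n)g_i(n,\ell)$ has full column rank with probability tending to one; a good deterministic choice then exists by averaging. The matrix fails to have full column rank exactly when some nonzero $\mathbf v\in\FF_q^{KL_b}$ lies in its kernel. Grouping the candidate vectors by their user support $U(\mathbf v)=\{i:v_{i,\cdot}\neq 0\}$, I would observe that for a block $n$ the corresponding entry of $\AA\mathbf v$ is a nontrivial linear form in the independent uniform variables $\{g_i(n,\ell)\}$ precisely when $U(\mathbf v)$ meets the active set of block $n$, so that, conditioned on the states, $\Pr[\AA\mathbf v=\zz]=q^{-T(U)}$ where $T(U)$ is the number of blocks in which at least one user of $U$ is active. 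Since there are at most $q^{|U|L_b}$ vectors with support $U$, a union bound gives
\[
 \Pr[\AA\text{ rank-deficient}\mid S]\ \leq\ \sum_{\emptyset\neq U\subseteq\{1,\dots,K\}} q^{\,|U|L_b-T(U)}.
\]

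It remains to certify a positive exponent for every $U$. By concentration, $T(U)$ is with high probability within $\eta N$ of its mean $N(1-(1-a)^{|U|})=|U|\,N f(|U|)$, where $f(m)=(1-(1-a)^m)/m$. The key structural fact, which I would verify by showing $(1-a)^m(1+ma)<1$ for $a\in(0,1)$, is that $f$ is strictly decreasing, so every proper subset automatically satisfies $f(|U|)>f(K)$ with a margin bounded away from zero. The only critical case is the full set $U=\{1,\dots,K\}$, where the mean of $T(U)$ exactly matches $KL_b$; I expect this to be the main obstacle, and I would resolve it by taking $L_b/N$ slightly below $f(K)$, i.e.\ $L_b=\lceil N(f(K)-\delta)\rceil$. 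With this back-off each exponent $|U|L_b-T(U)$ is at most $-(\delta-\eta)N$ for $\eta<\delta$, so the finite sum over the $2^K-1$ nonempty subsets is at most $(2^K-1)q^{-(\delta-\eta)N}\to 0$. Averaging over the overwhelmingly concentrated states then gives a vanishing rank-deficiency probability, and letting $\delta\to0$ recovers the stated throughput.
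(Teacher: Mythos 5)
Your proposal is correct and follows exactly the skeleton the paper itself lays out --- the choices \eqref{eq:Lc} and \eqref{eq:Lb} combined through $T=KL_bL_c\log_2 q/(NB)$ --- while supplying in full the rank argument that the paper explicitly omits (``A full proof is omitted. It consists mostly of demonstrating that the system of linear equations given by~\eqref{eq:b} is full rank.''). Your treatment of that omitted step is sound: the union bound over kernel vectors grouped by user support with $\Pr[\AA\mathbf{v}=\zz\mid S]=q^{-T(U)}$, the strict monotonicity of $f(m)=(1-(1-a)^m)/m$ (equivalent, as you note, to $(1-a)^m(1+ma)<1$, which follows from Bernoulli since $(1-a)^m(1+ma)\leq(1-a^2)^m$), and the $\delta$ back-off at $U=\{1,\dots,K\}$ --- where $KL_b$ exactly matches $\EE[T(U)]$ and is absorbed by the $\epsilon_1$ slack in the achievability definition --- together identify and resolve precisely the delicate point the paper waves at.
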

A full proof is omitted. It consists mostly of demonstrating that the system of linear equations given by~\eqref{eq:b} is full rank. In the remainder of this section we provide the intuition.

We rewrite~\eqref{eq:b} as
\begin{equation} \label{eq:system}
 \AA\MM=\bb,
\end{equation}
where
\begin{equation}
\MM =  \begin{bmatrix}
  \MM_1 \\  \vdots \\ \MM_K
\end{bmatrix},
\quad \MM_i=
\begin{bmatrix}
  M_i(1) \\ \vdots \\  M_i(L_b)     
\end{bmatrix},\quad
\bb =
\begin{bmatrix}
  b(1) \\ \vdots \\  b(L_b)     
\end{bmatrix}
\end{equation}
and
\begin{equation}
 \AA = \SS\GG,
\end{equation}
with $\SS=[\SS_1 \cdots \SS_K]$ and $\GG=\diag(\GG_1,\dots,\GG_K)$.

Consider again the example from Figure~\ref{fig:introplnc}, represented in~\eqref{eq:intromatrix}. In this case $\AA$ can be written as
\begin{equation}
 \AA =
\begin{bmatrix}
 \gg_1(1) & \zz \\
 \gg_1(2) & \gg_2(2) \\
 \zz & \gg_2(3) \\
 \zz & \zz \\
 \gg_1(5) & \zz
\end{bmatrix},
\end{equation}
where $\gg_i(n)$ denotes the $n$-th row of $\GG_i$. It can be proven that the probability that the zeros in the matrix $\AA$ cause a rank defficiency is vanishing for large $N$.

\section{Comparison to Other Strategies} \label{sec:comparison}

In this section we present results on the achievable throughput of various other strategies. Subsection~\ref{ssec:aloha} deals with slotted ALOHA. The throughput of multipacket reception is analyzed in Subsection~\ref{ssec:minero}. A strategy that ignores that state of the user is presented in~\ref{ssec:ignore}. Finally, an upper bound to the achievable throughput is given in Subsection~\ref{ssec:upper}. A numerical evaluation of all these results is given in the next section.

\subsection{Slotted ALOHA} \label{ssec:aloha}
In slotted ALOHA users transmit at the maximum achievable rate, where the maximum is under the condition that there is only a single user in the system, \ie 
\begin{equation}
 R = \frac{1}{2}\log_2(1+a^{-1}P).
\end{equation}
The receiver is only able to decode if there is a single active user.

\begin{theorem}
Slotted ALOHA achieves throughput
\begin{equation}
 T = Ka(1-a)^{K-1}\frac{1}{2}\log_2\left(1+a^{-1}P\right).
\end{equation}
\end{theorem}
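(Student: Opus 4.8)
The plan is to exhibit an explicit slotted-ALOHA strategy and compute the largest reliable per-user rate it supports, then invoke $T=KR$. First I would fix the power policy allowed by the long-term constraint: each user transmits with power $a^{-1}P$ in every block in which it is active and is silent otherwise, so that the average power is exactly $P$. When user $i$ is active it feeds an AWGN codeword of blocklength $B$ into that block; since the noise has unit variance, a codeword at any rate strictly below $\frac{1}{2}\log_2(1+a^{-1}P)$ bits per channel use is decodable with error probability vanishing as $B\to\infty$, \emph{provided} user $i$ is the only active user in that block, in which case the block is a point-to-point AWGN channel with signal-to-noise ratio $a^{-1}P$.

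The central observation is that, because the receiver knows all states, a block is useful for user $i$ precisely when $S_i(n)=1$ and $S_j(n)=0$ for every $j\neq i$: if two or more users are active a collision destroys all information, and if user $i$ is inactive it sends nothing. This lone-transmitter event has probability $a(1-a)^{K-1}$ and occurs independently across blocks. The step I expect to be the main obstacle is that user $i$ does not know the states of the other users at encoding time, so it cannot aim its codeword at its clean blocks; from its own viewpoint each of its active blocks is independently erased with probability $1-(1-a)^{K-1}$. I would therefore place an outer erasure-correcting code (for instance an MDS or random-linear code over $\FF_q$) across the $N$ blocks, with redundancy matched to this erasure rate, so that $M_i$ is recovered whenever enough collision-free blocks arrive; the rare decoding failures of the inner AWGN code can be folded in as additional erasures and, for $B$ large, contribute negligibly.

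It then remains to count clean blocks. By the law of large numbers the number of blocks in which user $i$ is the sole active user concentrates at $Na(1-a)^{K-1}(1+o(1))$, so for any $\delta>0$ at least $N\!\left(a(1-a)^{K-1}-\delta\right)$ clean blocks arrive with probability approaching one as $N\to\infty$. Choosing the outer code rate just below this fraction, user $i$ reliably conveys about $N a(1-a)^{K-1}\cdot\frac{B}{2}\log_2(1+a^{-1}P)$ bits over the $NB$ channel uses, giving a per-user rate $R=a(1-a)^{K-1}\frac{1}{2}\log_2(1+a^{-1}P)$, with the total error probability driven below any prescribed $\epsilon_2$ by first taking $B$ and then $N$ large. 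Since the inner- and outer-code concentration arguments are independent across users, a union bound over the $K$ users preserves the vanishing error probability, and summing the identical per-user contributions through $T=KR$ yields the claimed throughput $T=Ka(1-a)^{K-1}\frac{1}{2}\log_2(1+a^{-1}P)$.
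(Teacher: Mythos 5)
Your proposal is correct and takes essentially the same approach as the paper, whose entire proof consists of the single observation at the heart of your argument: a given user is the sole active user in a block with probability $a(1-a)^{K-1}$. The additional machinery you supply---power control at $a^{-1}P$, the inner AWGN code per block, the outer erasure code across blocks with the receiver's state knowledge fixing the erasure positions, and the law-of-large-numbers bookkeeping---is exactly the standard detail the paper leaves implicit, and you fill it in soundly (one small wording caveat: the per-user error events are not independent across users, but your union bound never needed independence).
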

\begin{proof}
The probability that any particular user is the only user is $a(1-a)^{K-1}$.
\end{proof}

\subsection{Multipacket reception} \label{ssec:minero}
In this section we consider multipacket reception together with adaptive rates~\cite{minero2012random,medard2004capacity,minero2009mpr}. The strategy consists of choosing the rate such that the packets of all active users can be decoded as long as this number is at most $\tilde K$, \ie all users transmit at rate
\begin{equation}
 R=\frac{1}{2\tilde K}\log_2\left[1+\tilde Ka^{-1}P\right].
\end{equation}
Part of the strategy is to optimize over $\tilde K$.

\begin{theorem}[\cite{minero2012random}]
Multipacket reception achieves throughput
\begin{equation}
 T = \max_{\tilde K} \sum_{i=1}^{\tilde K} \binom{K}{i}a^i(1-a)^{K-i} \frac{i}{2\tilde K}\log_2\left[1+\tilde Ka^{-1}P\right].
\end{equation}
\end{theorem}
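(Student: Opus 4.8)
The plan is to analyze, block by block, what the receiver can recover under the fixed symmetric rate $R=\frac{1}{2\tilde K}\log_2[1+\tilde K a^{-1}P]$, and then to average the number of successfully delivered messages over the random activity pattern. First I would invoke the power-control remark from Section~\ref{sec:model}: an active user transmits at power $a^{-1}P$, so in a block in which exactly $i$ users are active the receiver sees an $i$-user Gaussian multiple-access channel in which each active user has power $a^{-1}P$. Because the receiver knows all states, it knows precisely which $i$ users are present and can apply a standard joint MAC decoder; the users, by contrast, do not know $i$ and therefore must commit to a single rate $R$, which is why the rate is designed for the worst case $i=\tilde K$.

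The key step is to decide, for each $i$, whether the common rate $R$ lies in the symmetric capacity region of that $i$-user Gaussian MAC. For equal powers and equal rates the region collapses to the family of subset-sum constraints $R\le\frac{1}{2m}\log_2(1+m a^{-1}P)$ for $m=1,\dots,i$. I would then show that $g(m):=\frac{1}{2m}\log_2(1+m a^{-1}P)$ is strictly decreasing in $m$; this follows from $\frac{x}{1+x}<\ln(1+x)$ for $x>0$, which makes the derivative of $m\mapsto m^{-1}\ln(1+m a^{-1}P)$ negative. Hence the binding constraint is the full sum-rate bound at $m=i$, and the choice $R=g(\tilde K)$ satisfies $R=g(\tilde K)\le g(i)$ for every $i\le\tilde K$. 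Thus all $i$ messages are decodable whenever $i\le\tilde K$, while for $i>\tilde K$ the sum-rate constraint is violated and, in this hard-threshold model, no message is recovered.

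It then remains to count. The number of active users in a block is $\mathrm{Binomial}(K,a)$, so exactly $i$ users are active with probability $\binom{K}{i}a^i(1-a)^{K-i}$; conditioned on this and on $i\le\tilde K$, all $i$ active users are decoded, each contributing rate $R$. Averaging the number of delivered messages per channel use over the activity pattern, the throughput equals $R$ times the expected number of decoded users, namely $\sum_{i=1}^{\tilde K}\binom{K}{i}a^i(1-a)^{K-i}\,iR=\sum_{i=1}^{\tilde K}\binom{K}{i}a^i(1-a)^{K-i}\frac{i}{2\tilde K}\log_2[1+\tilde K a^{-1}P]$. Maximizing over the design parameter $\tilde K$ gives the claimed expression.

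I expect the main obstacle to be not the arithmetic but the careful justification that this block-averaged expected number of delivered bits is genuinely achievable in the formal sense of the achievable-throughput definition: one must spread each user's message across its active blocks (analogously to the $\GG_i$ construction of the proposed strategy) so as to obtain vanishing error probability at the stated throughput, rather than merely an informal ``rate times success-probability'' heuristic. The monotonicity of $g$ and the reduction of the symmetric MAC region to its sum-rate constraint are the remaining points needing care, but both are routine once set up.
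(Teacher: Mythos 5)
Your proposal is correct and is essentially the argument the paper delegates to its citation of \cite{minero2012random}: fix the common rate $R=\frac{1}{2\tilde K}\log_2\left(1+\tilde K a^{-1}P\right)$, use the monotonicity of $m\mapsto\frac{1}{2m}\log_2\left(1+ma^{-1}P\right)$ (which makes the sum-rate constraint the binding one, so all active users are jointly decodable precisely when at most $\tilde K$ are active), and average the number of decoded users over the binomial activity pattern before optimizing over $\tilde K$. Your closing concern about formal achievability is resolved by your own suggestion: spreading each user's message over its active blocks yields per-user rate $a\sum_{j=0}^{\tilde K-1}\binom{K-1}{j}a^j(1-a)^{K-1-j}R$, and summing over the $K$ users reproduces the stated expression via the identity $i\binom{K}{i}=K\binom{K-1}{i-1}$, so the ``rate times success probability'' count is rigorous here.
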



\subsection{Ergodic limit while ignoring CSI} \label{ssec:ignore}
If we ignore CSI we have an ordinary fading channel for which capacity is known~\cite{tse1998multiaccess}. In particular, this capacity can be achieved by coding across many blocks. In this case the users can not use power $a^{-1}P$ in active blocks, but are restricted to using average power $P$ in all blocks. For two users we have
\begin{equation}
 T = a(1-a)\log[1 + P] + \frac{a^2}{2}\log_2[1 + 2P].
\end{equation}
The general case is treated next.
\begin{theorem}[\cite{tse1998multiaccess}]
A throughput
\begin{equation}
 T = \sum_{i=1}^K \binom{K}{i}a^i(1-a)^{K-i}\frac{1}{2}\log_2\left(1+i P\right).
\end{equation}
is achievable by ignoring CSI and coding across many blocks.
\end{theorem}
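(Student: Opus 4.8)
The plan is to observe that discarding channel state information at the transmitters turns the model of~\eqref{eq:channel} into an ergodic block-fading multiple-access channel, to invoke the ergodic-capacity characterization of~\cite{tse1998multiaccess}, and then to specialize it to the symmetric sum-rate. First I would fix the fading interpretation. ``Ignoring CSI'' means each encoder emits a codeword of constant power $P$ in every channel use, with no dependence on the state; the average power constraint $\frac{1}{NB}\sum_{t}\EE[X_i^2[t]]\le P$ is then met with equality, and no power control is possible. The on-off state $S_i(n)\in\{0,1\}$ acts as a multiplicative block-fading coefficient, so that an active user is received at power $S_i^2P=P$. Because the states are \iid across blocks, the fading process is stationary and ergodic, and coding across $N\to\infty$ blocks lets a single fixed-rate codebook average over the fading law while the receiver, which keeps full state knowledge, decodes jointly.

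The second step is to apply the result of~\cite{tse1998multiaccess}: for a fading MAC with receiver CSI and fixed transmit powers, the achievable ergodic region is the polymatroid whose constraints are the expectations, over the fading distribution, of the instantaneous Gaussian-MAC rate constraints. The binding sum-rate face is
\begin{equation}
 \sum_{k=1}^K R_k \;\le\; \EE_{S}\!\left[\frac{1}{2}\log_2\!\left(1+P\sum_{i=1}^K S_i\right)\right],
\end{equation}
where the integrand is the ordinary Gaussian-MAC sum capacity for a frozen state: if $i$ users are active each contributes received power $P$, the aggregate signal power is $iP$ against unit-variance noise, giving $\frac{1}{2}\log_2(1+iP)$.

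Third, I would specialize to the symmetric case and evaluate the expectation. Since all users share the same fading law and power, the rank function depends on a subset only through its cardinality and is concave in it; hence the equal-rate point $R_k=T/K$ lies on the sum-rate face, so the throughput equals the sum-rate bound. As $\sum_{i}S_i$ is the number of active users, which is a $\mathrm{Binomial}(K,a)$ variable, conditioning on this count and dropping the vanishing $i=0$ term ($\log_2 1=0$) yields
\begin{equation}
 T = \sum_{i=1}^K \binom{K}{i}a^i(1-a)^{K-i}\frac{1}{2}\log_2\left(1+iP\right),
\end{equation}
which is the claimed throughput; the two-user display preceding the theorem is the $K=2$ instance and provides a consistency check.

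The main obstacle is conceptual rather than computational: justifying that the full polymatroidal ergodic region of~\cite{tse1998multiaccess} collapses, in the symmetric sum-rate, to this single expected logarithm, and---more importantly---pinning down that without CSI the per-user received power is $P$ rather than $a^{-1}P$. This lost factor is precisely what separates the present bound from Theorem~\ref{th:main}, and it is the crux of the comparison the section is building toward. Once these points are settled, the remainder is a routine binomial expectation.
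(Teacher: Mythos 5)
Your proposal is correct and takes essentially the same route as the paper, which proves this theorem simply by interpreting the stateless encoders as facing an ergodic on-off fading MAC with receiver-only CSI, citing \cite{tse1998multiaccess}, noting that without state knowledge the users are restricted to constant power $P$ (rather than $a^{-1}P$), and evaluating the sum-rate expectation over the $\mathrm{Binomial}(K,a)$ number of active users. Your extra care about the polymatroid structure and the feasibility of the symmetric rate point on the sum-rate face merely fills in details the paper leaves implicit.
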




\begin{figure*}
\hfill
\begin{minipage}{.45\textwidth}
\includegraphics[width=\textwidth]{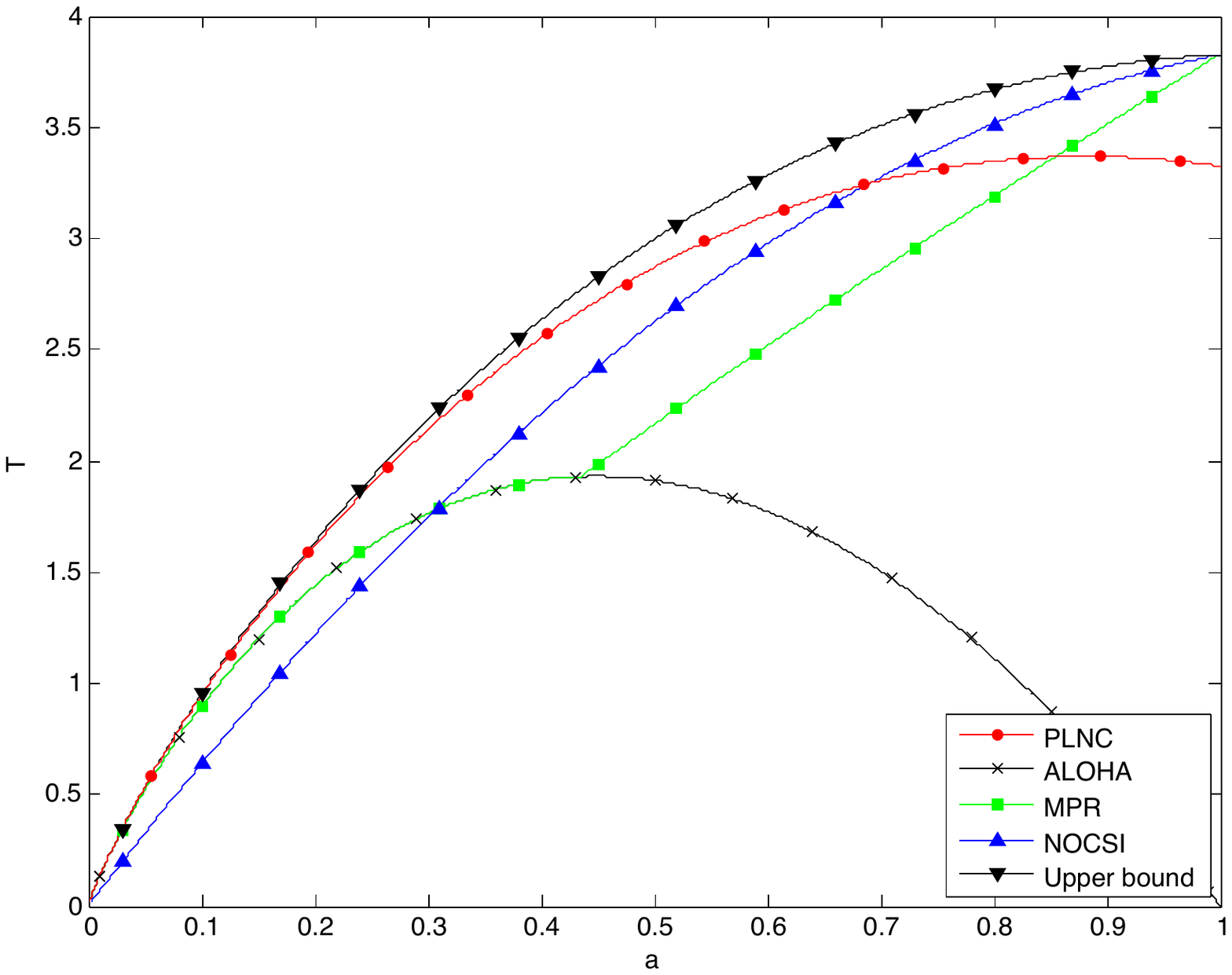}
\caption{Two users ($K=2$), low power ($P=10^2$). \label{fig:numAK2P2}}
\end{minipage}
\hfill
\begin{minipage}{.45\textwidth}
\includegraphics[width=\textwidth]{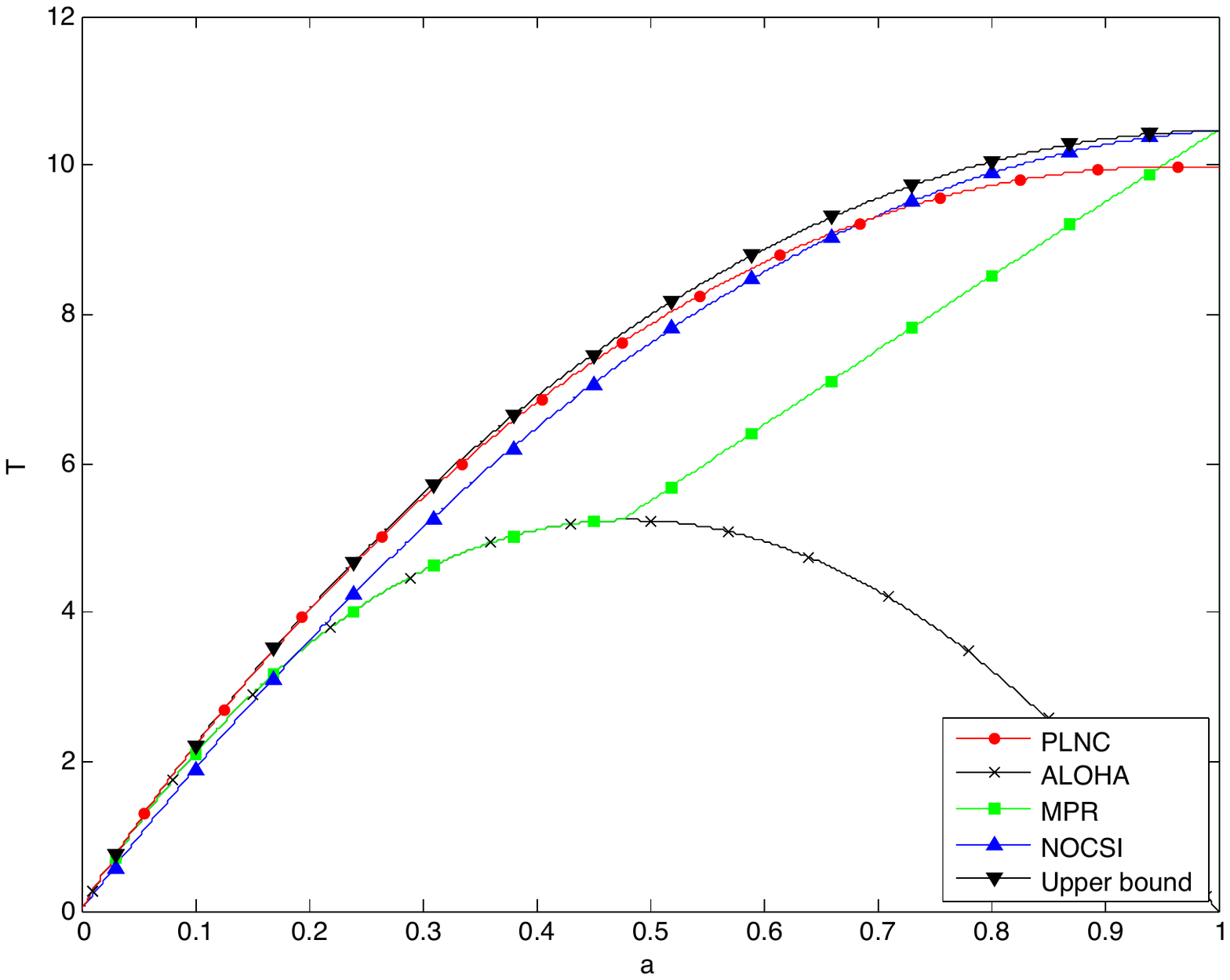}
\caption{Two users ($K=2$), high power ($P=10^6$). \label{fig:numAK2P6}}
\end{minipage}
\hfill{}

\vspace{6mm}

\hfill
\begin{minipage}{.45\textwidth}
\includegraphics[width=\textwidth]{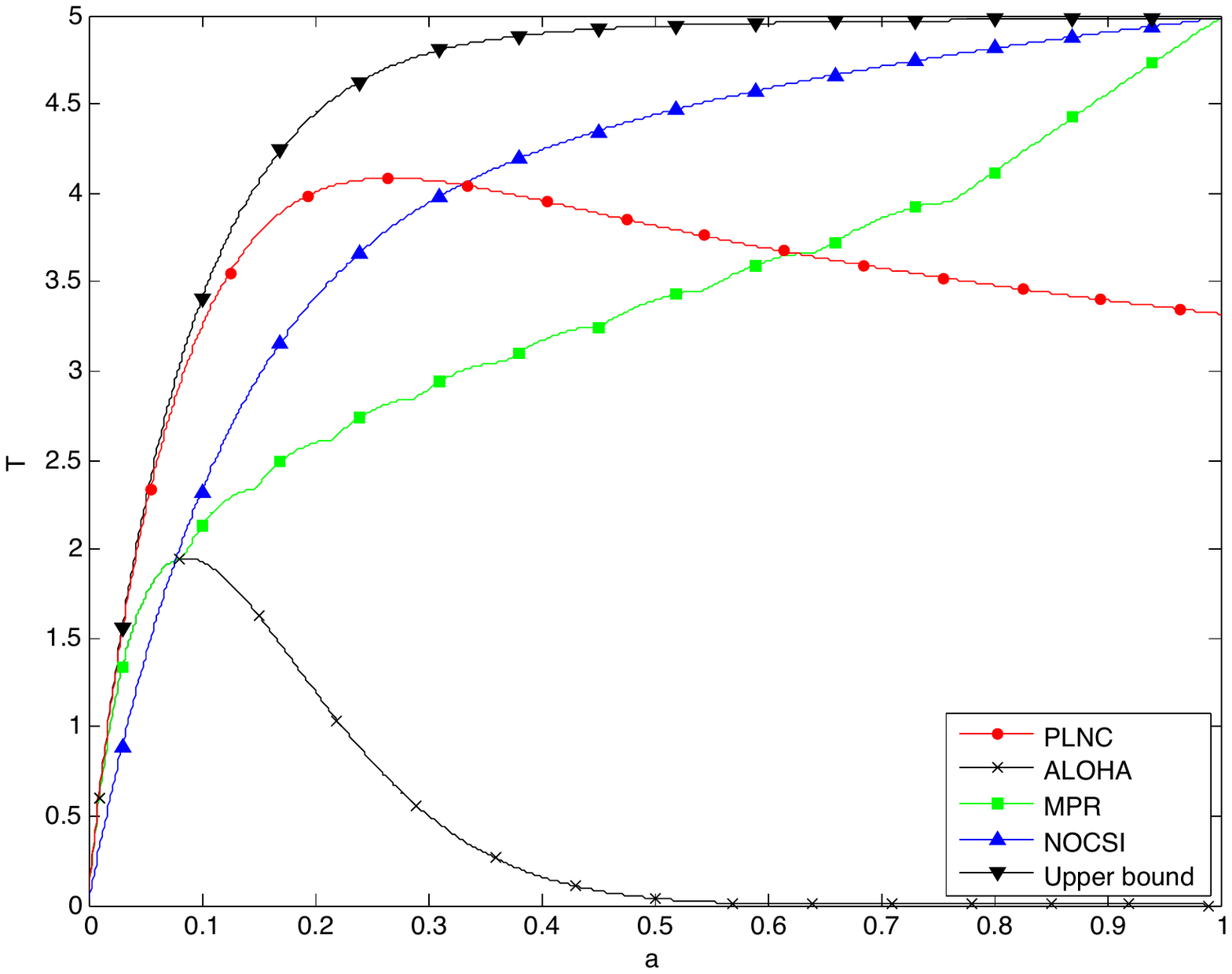}
\caption{Ten users ($K=10$), low power ($P=10^2$). \label{fig:numAK10P2}}
\end{minipage}
\hfill
\begin{minipage}{.45\textwidth}
\includegraphics[width=\textwidth]{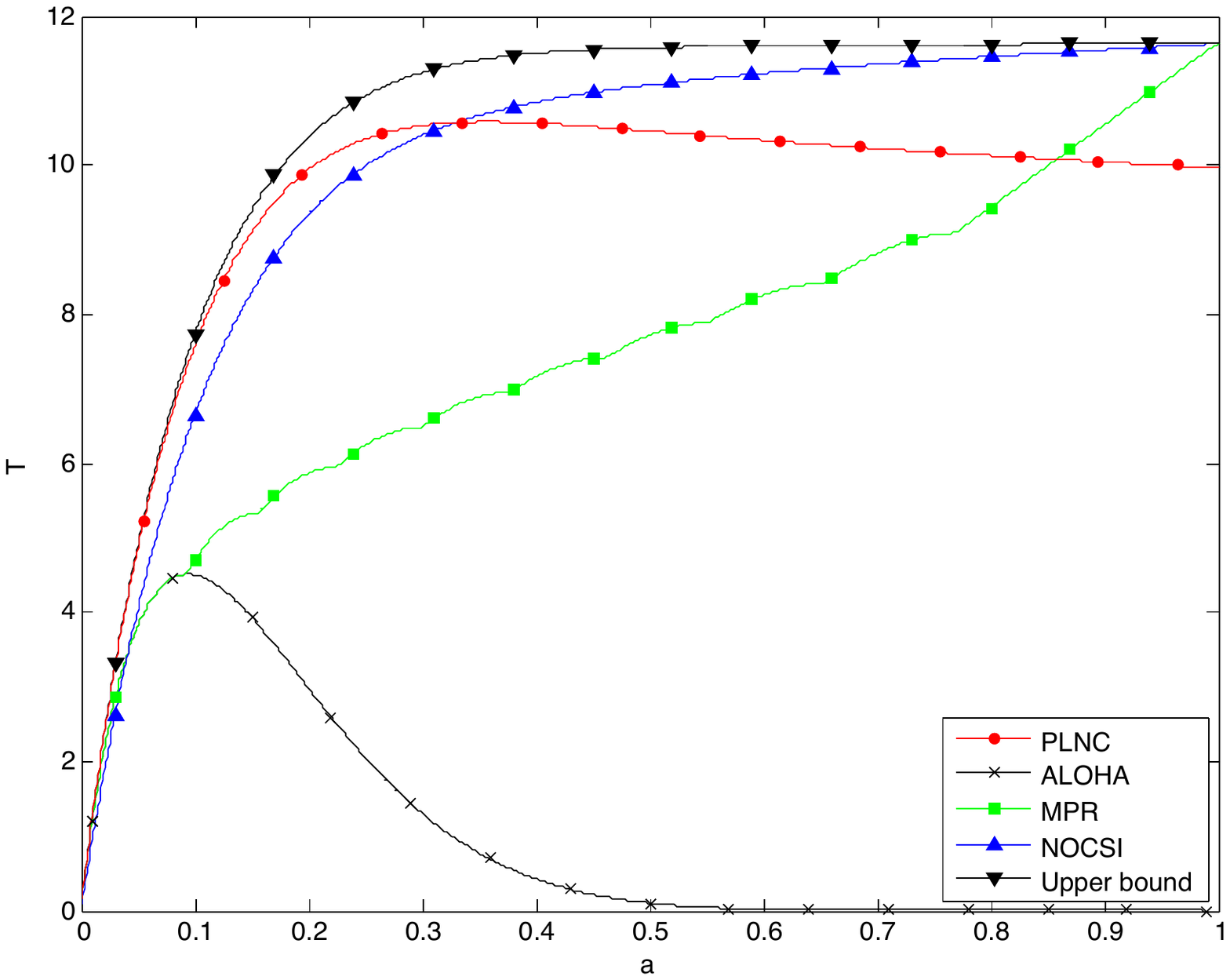}
\caption{Ten users ($K=10$), high power ($P=10^6$). \label{fig:numAK10P6}}
\end{minipage}
\hfill{}

\end{figure*}


\subsection{Upper bound} \label{ssec:upper}
We construct an upper bound by considering the case that users have complete knowledge of the state of all users.  A well known result, \cf~\cite{elgamalkim}, is that in this case the sum rate is upper bounded as
\begin{equation} \label{eq:upperstart}
 T \leq \EE\left[\frac{1}{2}\log_2\left(1+\sum_{i=1}^K S_i a^{-1} P\right)\right], 
\end{equation}
where the expectation is over the user states. For two users the bound reduces to
\begin{equation}
 T \leq a(1-a)\log_2\left[1+a^{-1}P\right] + \frac{a^2}{2}\log_2\left(1+2a^{-1}P\right).
\end{equation}
The general case follows readily by observing the number of active users is binomially distributed and is given in the next theorem.
\begin{theorem}
If throughput $T$ is achievable, then
\begin{equation}
 T \leq \sum_{i=1}^K \binom{K}{i}a^i(1-a)^{K-i}\frac{1}{2}\log_2\left(1+ia^{-1}P\right).
\end{equation}
\end{theorem}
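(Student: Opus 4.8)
The plan is to take the already-established genie bound \eqref{eq:upperstart} as the starting point and to finish by merely evaluating the expectation on its right-hand side. The essential observation is that the integrand $\frac{1}{2}\log_2\!\left(1+a^{-1}P\sum_{i=1}^{K}S_i\right)$ depends on the state vector $(S_1,\dots,S_K)$ only through the scalar $\sum_{i=1}^{K}S_i$, the number of active users, so the expectation collapses to a one-dimensional sum.

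First I would recall how \eqref{eq:upperstart} arises, since this is where the real work sits. Revealing every state $S_1,\dots,S_K$ to all transmitters cannot decrease the achievable throughput, and conditioned on a state vector with exactly $i$ active users the channel \eqref{eq:channel} is an $i$-user Gaussian multiple-access channel in which each active user transmits with power $a^{-1}P$ against unit-variance noise. The sum-rate capacity of that channel is the familiar $\frac{1}{2}\log_2(1+i a^{-1}P)$; averaging this per-block quantity over the law of the states produces exactly \eqref{eq:upperstart}, which I am entitled to assume.

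Next I would carry out the averaging. Because the $S_i$ are independent Bernoulli($a$) random variables, their sum $\sum_{i=1}^{K}S_i$ is Binomial($K,a$) and equals $i$ with probability $\binom{K}{i}a^{i}(1-a)^{K-i}$. The law of total expectation then gives
\[
\EE\!\left[\frac{1}{2}\log_2\!\left(1+a^{-1}P\sum_{i=1}^{K}S_i\right)\right]=\sum_{i=0}^{K}\binom{K}{i}a^{i}(1-a)^{K-i}\,\frac{1}{2}\log_2\!\left(1+i a^{-1}P\right).
\]
The $i=0$ term equals $\frac{1}{2}\log_2 1=0$, so the summation may be started at $i=1$, and combining this identity with \eqref{eq:upperstart} yields the stated bound. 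Specializing to $K=2$ recovers the two-user expression displayed immediately before the theorem, which is a useful consistency check.

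The final step is thus an elementary binomial expansion with no real difficulty; the substance is entirely in \eqref{eq:upperstart}. If one wished to re-derive \eqref{eq:upperstart} from first principles, the argument would be a routine Fano-inequality converse for the state-dependent Gaussian MAC, and the single delicate point would be the power bookkeeping: one must argue that each user's received power in an active block may be taken to be $a^{-1}P$, which is pinned down by the fact that a transmitter's power cannot be adapted to the other users' states together with the long-term constraint $a\cdot a^{-1}P=P$. That bookkeeping is the only place where care is genuinely required.
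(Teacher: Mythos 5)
Your proposal is correct and follows essentially the same route as the paper: the paper's proof is simply ``assume full CSI at all users,'' invoking the genie bound \eqref{eq:upperstart} and then, as noted in the text preceding the theorem, evaluating the expectation via the binomial distribution of the number of active users. Your additional remarks on the origin of \eqref{eq:upperstart} and the power bookkeeping elaborate on, but do not depart from, the paper's argument.
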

\begin{IEEEproof}
Assume full CSI at all users.
\end{IEEEproof}


%
%
%
\section{Numerical evaluation} \label{sec:results}


\begin{figure*}[t]
\hfill
\begin{minipage}{.45\textwidth}
\vspace{8mm}
\includegraphics[width=\textwidth]{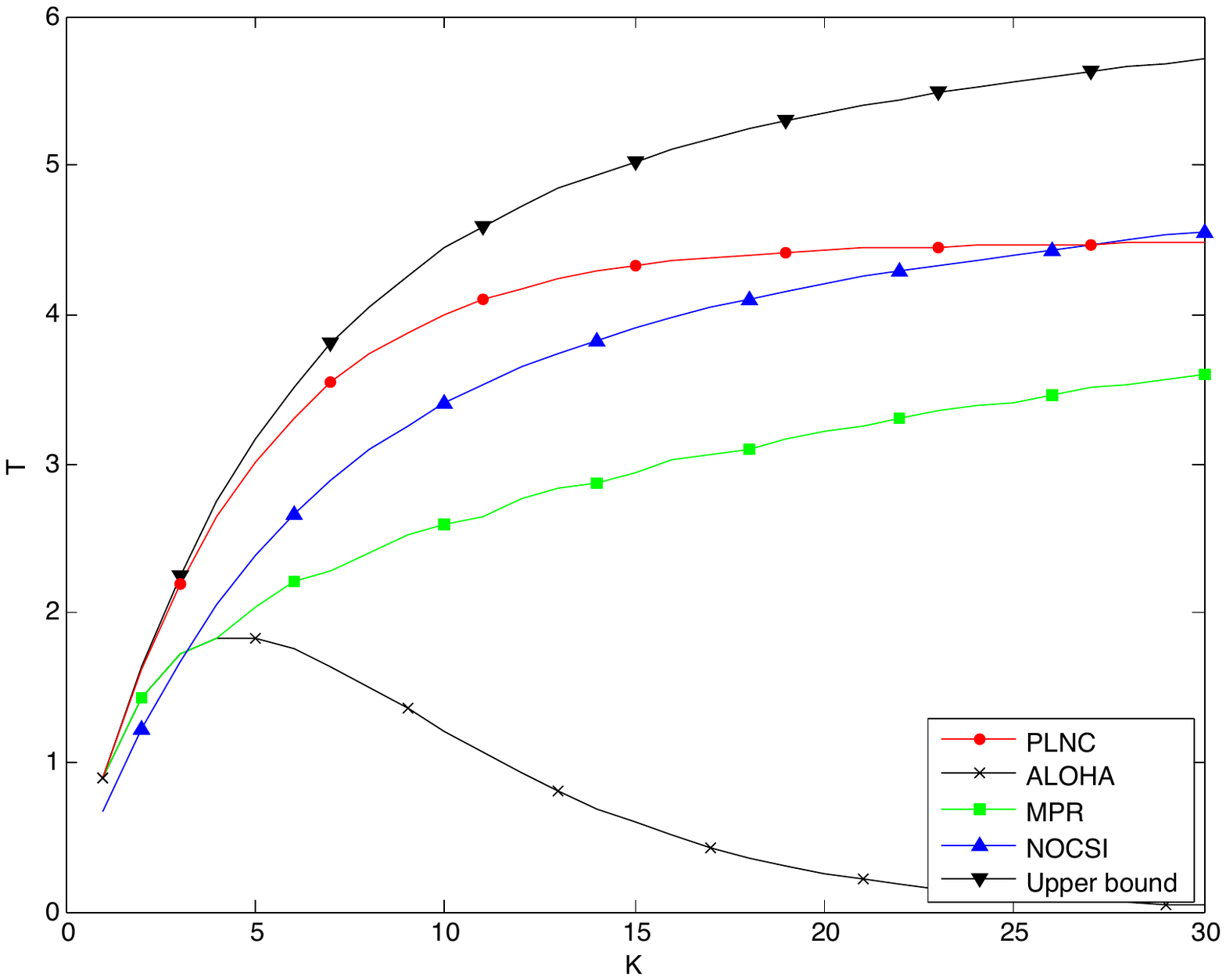}
\end{minipage}
\hfill
\begin{minipage}{.45\textwidth}
\vspace{8mm}
\includegraphics[width=\textwidth]{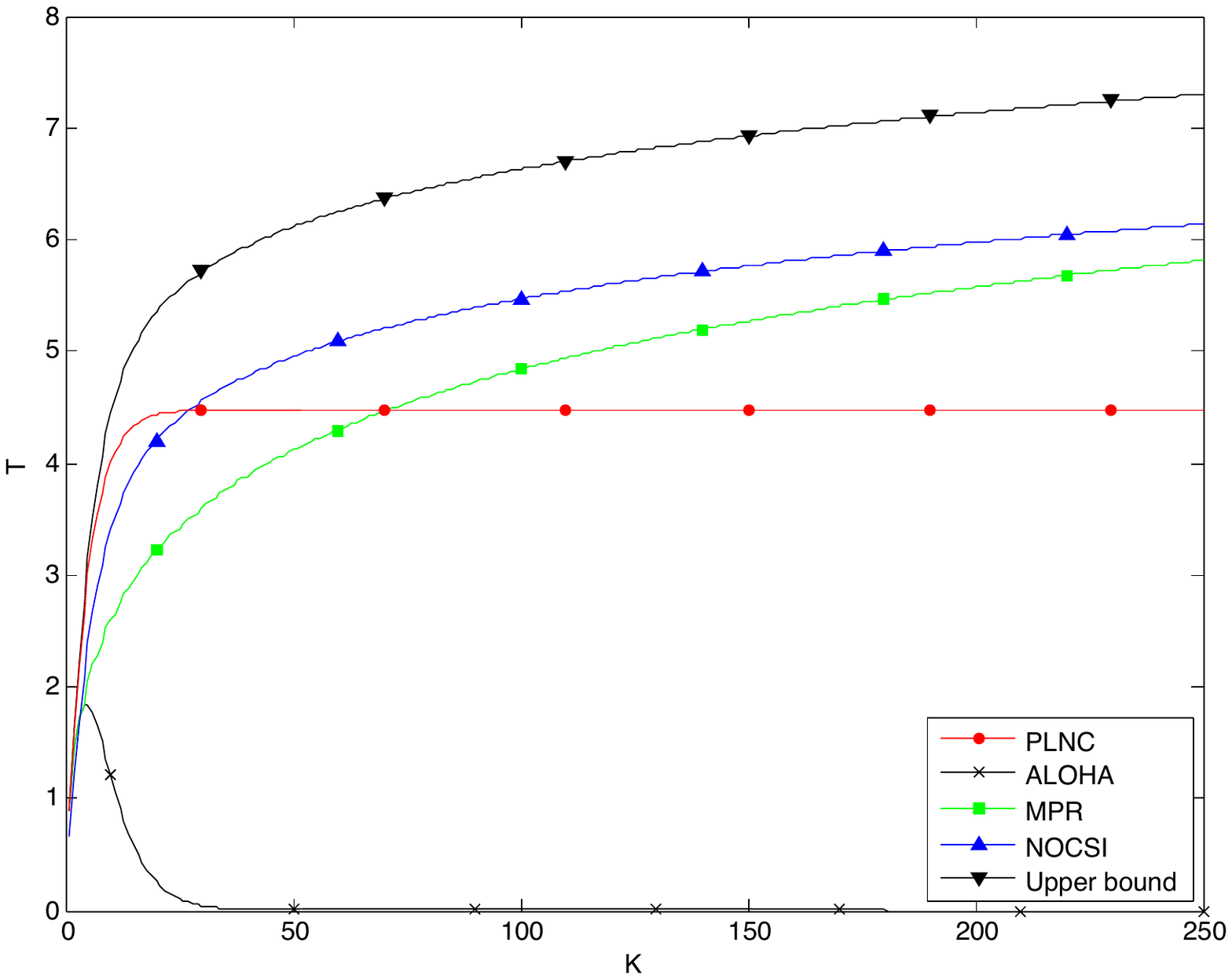}
\end{minipage}
\hfill{}
\caption{Low access probability ($a=0.2$), low power ($P=10^2$).\label{fig:numK}}
\end{figure*}

In this section we provide a numerical evaluation of the performance of the various strategies and the upper bound. In Figures~\ref{fig:numAK2P2}--\ref{fig:numAK10P6} we have plotted the throughput $T$ as a function of the access probability $a$ for various values for the number of users $K$ and the transmitter power $P$. In Figure~\ref{fig:numK} we have plotted $T$ as a function of $K$ for fixed $a$. 

The figures clearly demonstrate the well-known fact that ALOHA does not perform well for high access probability or many users. They also show the piecewise concave behavior of the throughput of the multipacket reception strategy. This behavior was  demonstrated analytically in~\cite{minero2012random}. For $a=1$, the model reduces to a classical multiple-access channel, \ie a channel without states. For such a channel multipacket reception is optimal. This is reflected in the figures, where at $a=1$, multipacket reception achieves the upper bound.

For moderate values of $a$ and $K$ the physical-layer network coding strategy performs significantly better than all other schemes. Moreover, the difference between the performance of the physical-layer network coding scheme and the upper bound is decreasing in the transmitter power.

Figure~\ref{fig:numK} suggests that the throughput of the physical-layer network coding strategy does not scale well in the number of users. It readily follows from Theorem~\ref{th:main} that the throughput of the strategy is less than $\frac{1}{2}\log_2(a^{-1}P)$, \ie it is constant in $K$. In contrast, the throughput of the multipacket reception scheme scales logarithmically in the number of users. However, this is not a fundamental limitation of the strategy introduced
here. Rather, it is an artifact of the fact that we are only considering
decoding a single linear combination of the packets. Instead, as the
number of users increases, we could decode {\em multiple } linear combinations
of the original packets, thus boosting the throughput.
Computation coding techniques are currently being extended to cover the
case of decoding multiple linear equations, with early work appearing in
\cite{nazer2012successive}.

We have compared the performance of our physical-layer network coding strategy with that of ALOHA and multipacket reception. Other strategies, that have not been taken into account in this analysis, have been proposed in the literature. In~\cite{minero2012random}, for instance, a superposition strategy is proposed in which information is organized in different layers, a subset of which is then opportunistically decoded. However, as oberved in~\cite{minero2012random}, on the AWGN channel it is optimal to use a single layer.

%
%


%
%
%
\section{Discussion} \label{sec:discussion}
We have presented an approach to random access that is based on physical-layer network coding. The gist of this strategy is that whenever packets collide, the receiver decodes a linear combination of these packets. The throughput that is achieved by this approach is significantly better than that of other approaches. 

The strategy as it is presented in the current paper does not employ feedback from the receiver. It can, however, easily be adjusted to incorporate feedback from the receiver. Consider, for instance, the case that the receiver provides an acknowledgement as soon as it decodes the system of linear equations~\eqref{eq:system} and recovers all message substrings from all users. The users can simply continue to transmit random linear combinations of their message substrings until they receive the acknowledgement. This results in a rateless strategy that is particularly easy to implement.

Observe that in our model we allow for strategies that perform coding over blocks. This potentially leads to very large delays. In fact, the proof of Theorem~\ref{th:main} is based on using many blocks. We believe that the strategy can be incorporated in a practical implementation, for which the number of blocks that is used is small. In particular, we can, in each block transmit a linear combination of only a small number of message substrings and use a sliding window. This will allow to receiver the decode substrings after a small number of blocks. Similar approaches are proposed for broadcast channels in~\cite{kumar08arq} and in~\cite{sundararajan2009tcp}.

Finally, the comparison with the arguments in~\cite{minero2012random} reveals a further
interesting insight. Namely, in~\cite{minero2012random}, an information-theoretic upper
bound is presented that applies to any strategy in which {\em packets}
are decoded after every block. It is clear that if we allow to code over
many blocks, the throughput can be enlarged beyond this upper bound,
but this requires the receiver to store all the physical-layer channel
outputs of all the blocks, which is a high price to pay in terms of
implementation complexity.

It is interesting to observe that the new strategy introduced in this paper
{\em also} permits to significantly outperform the upper bound presented
in~\cite{minero2012random}. This is not a contradiction since in our strategy, after every
block, we allow to decode linear combinations of packets, rather than
only entire packets. In this sense, the proposed strategy is a simple version
of coding over many blocks, where the receiver can discard the physical-layer
channel outputs after every block. Moreover, it should be expected that
a careful implementation of the strategy, will only require coding
over a small number of blocks.

%
%
%
\section{Acknowledgement}
This work was supported in part by the European ERC Starting Grant 259530-ComCom and by NWO grant $612.001.107$.


%
%
%

%
%
%
\bibliographystyle{IEEEtran}
\bibliography{IEEEabrv,randomaccess_arxiv}

\end{document}